\algnewcommand{\Inputs}[1]{%
  \State \textbf{Inputs:}
  \Statex \hspace*{\algorithmicindent}\parbox[t]{.8\linewidth}{\raggedright #1}
}
\newcommand{\thickhline}{%
    \noalign {\ifnum 0=`}\fi \hrule height 1pt
    \futurelet \reserved@a \@xhline
}
\newcolumntype{"}{@{\hskip\tabcolsep\vrule width 1pt\hskip\tabcolsep}}
\DeclareRobustCommand{\pmod}{\@ifstar\@pmods\@@pmod}
\def\@pmods#1{\mkern4mu({\operator@font mod}\mkern 6mu#1)}
\algnewcommand{\Initialise}[1]{%
  \State \textbf{Initialise:}
  \Statex \hspace*{\algorithmicindent}\parbox[t]{.8\linewidth}{\raggedright #1}
}
\newtheorem{theorem}{Theorem}
\newtheoremstyle{named}{}{}{\itshape}{}{\bfseries}{.}{.5em}{\thmnote{#3's }#1}
\theoremstyle{named}
\begin{document}
{\renewcommand{\arraystretch}{1.2}
\newcommand{\listelist}[2]{
	\begin{enumerate}
	\item #1
	\item #2
	\end{enumerate}}

\title{ Modelling 1D Partially Absorbing Boundaries for Brownian Molecular Communication Channels\\[.1ex]}

\author{Caglar~Koca,~\IEEEmembership{Student Member,~IEEE,} and Ozgur~B. Akan,~\IEEEmembership{Fellow,~IEEE}
\thanks{Caglar Koca and Ozgur B. Akan are with the Internet of Everything (IoE) group, University of Cambridge, CB3 0FA, Cambridge, UK (email: \{ck542,oba21\}@cam.ac.uk).}
\thanks{Ozgur B. Akan is also with the Center for neXtGeneration Communications (CXC), Department of Electrical and Electronics
Engineering, Koç University, 34450 Istanbul, Turkey (email: akan@ku.edu.tr).}
\thanks{This work was supported by AXA Research Fund (AXA Chair for Internet of Everything at Koç University).}
}

\maketitle

\begin{abstract}
Molecular Communication (MC) architectures suffer from molecular build-up in the channel if they do not have appropriate reuptake mechanisms. The molecular build-up either leads to intersymbol interference (ISI) or reduces the transmission rate. To measure the molecular build-up, we derive analytic expressions for the incidence rate and absorption rate for one-dimensional MC channels where molecular dispersion obeys the Brownian Motion. We verify each of our key results with Monte Carlo simulations. Our results contribute to the development of more complicated models and analytic expressions to measure the molecular build-up and the impact of ISI in MC.
\end{abstract}
\begin{IEEEkeywords}
Molecular Communications, intersymbol interference, molecular reuptake, channel clearance.
\end{IEEEkeywords}

\section{Introduction}
\label{sec:intro}

Molecular communication (MC) is the primary mode of communication in the nature. It is prevalent both among organisms, in the forms of pollen, spores and pheromones and within organisms in the forms of neurotransmitters, enzymes and hormones \cite{akan2016fundamentals, vizziello2023intra, castorina2016analytical}. However, unlike the electromagnetic waves, the information carrying molecules (IcM) tend to accumulate in the medium. Thus, MC architectures, especially when used in a confined volume, need a {\it reuptake} mechanism to prevent accumulation. In this work, we pave the way for building complex models describing molecular reuptake.

In an MC channel, the IcM build-up in the medium either dramatically reduces the transmission rate or leads to inter-symbol interference (ISI) \cite{tepekule2015isi, chen2017modeling, hyun2023isi, aktas2023odor, bilgen2024odor}. In nature, healthy organisms have well functioning reuptake mechanisms to mitigate ISI, which is one of the bottlenecks of MC \cite{dong2009molecular,ramezani2017rate}. Realising this, many prominent works include an absorbing receiver in their envisioned MC architectures to reduce the time to prepare the medium for the next transmission \cite{dissanayake2017reed, bao2021relative, kadloor2012molecular, vakilipoor2023localizing,yaylali2023channel,huang2023physical}. However, in SMC, absorption does not occur at the receivers but on the dedicated reuptake molecules. Thus, these works are not directly applicable to SMC \cite{pajarillo2019role, divito2014excitatory}.

Apart from the absorbing receiver assumption, most of the existing works in the MC literature generally use either simple mechanisms to describe the reuptake or ignore it altogether. The simple mechanisms include constant ratio reuptake on the boundaries and infinite volume assumption \cite{khan2017diffusion}.  Infinite volume assumption unnecessitates any reuptake mechanism as IcM can disperse to infinity. The system models that do not include any reuptake mechanism implicitly use the infinite volume assumption, allowing the IcM to disperse to and from the infinity \cite{noel2013using, sajjad2023high, wang2020understanding}.

More realistic models solve a boundary value problem (BVP). Depending on the system model, reuptake may occur at the receiver \cite{yilmaz2014three} or at the boundaries of the volume \cite{koca2022channel, lotter2020channel}. Solving the BVPs are easier for perfectly absorbing boundaries, as they only use the Dirichlet boundary condition, whereas partially absorbing boundaries use a combination of Dirichlet and Von Neumann boundary conditions. There are works in the literature that tackle this issue for spherical boundaries \cite{arjmandi2019diffusive, zoofaghari2021modeling} or works which might be extended to include the reuptake process \cite{lotter2020synaptic}.

In cases where BVPs are too hard to solve, Monte Carlo simulations are frequently used to approximate a solution \cite{koca2022narrow, koca2021molecular}. However, when we use Monte Carlo simulations to model partially absorbing boundaries further complications arise, as the rate of absorption depends on the space step used in the system \cite{erban2007reactive, koca2023fast}. If the absorption rate to the diffusion coefficient ratio is too high, then the space step might be too small to complete the simulation in reasonable time. Alternatively, a large space step would necessitate a large absorption probability, which might even exceed 1.

To address this issue, we derive analytic expressions for the second incidence rate in 1D. While our primary motivation is to find the absorption rate in MC, we believe second incidence rates might be of interest to the researchers of other disciplines, such as finance. Then, using the second incidence rates, we reach an analytic result for the absorption rate of the IcM. Although we work in a one-dimensional domain, our results can be extended to higher dimensional systems.

Our itemised major contributions in this work are
\begin{itemize}
\item deriving an expression for the second incidence rate in a semi-infinite 1D medium,
\item deriving an expression for the second incidence rate in a bounded 1D medium,
\item deriving an expression for the absorption rate in a bounded 1D medium.
\end{itemize}

The rest of this paper is organised as follows. In Sec. \ref{sec:sys}, we present the background information and the system model. In Sec. \ref{sec:no_reflect}, we find the second incidence rate in a discretised, one-dimensional semi-infinite medium. In Sec. \ref{sec:reflect_cont}, we derive an approximate expression for the second incidence rate in a medium bounded at both ends. Then, in Sec. \ref{sec:reflect}, we obtain an exact analytical expression for the second incidence rate in a bounded volume. We reach an analytical expression for the absorption rate in Sec. \ref{sec:rateabsicm}. Sec. \ref{sec:concpap01} is the conclusion.

\section{System Model}
\label{sec:sys}

In this work, assume a Brownian Motion, $\mathcal{B}$, in a one-dimensional domain with partially reflective end point(s). Except for Sec. \ref{sec:no_reflect}, where we use a semi-infinite domain with only one reflective end point, we always assume a finite domain of length $2L$. We discretise this domain into $2H$ segments, resulting in 
\begin{align}
\label{eq:L_def}
\Delta{x}=\frac{H}{L}.
\end{align}

After discretisation, we choose nodes as the centre poins of these $2H$ segments, i.e., located at $[-H+0.5, -H+1.5,\dots -0.5, 0.5, \dots, H-0.5]$, where we assume that $x=-H+0.5$ is the leftmost node of the domain. Apart from Sec. \ref{sec:no_reflect}, where the only boundary is located at $x=H$, both boundaries at $x\pm H$ are partially reflective.

We use the fixed jump model for $\mathcal{B}$, i.e., when $\mathcal{B}[N-1] \notin \{-H+0.5, H-0.5\}$, the movement is described by
\begin{equation}
\label{eq:brown_mot_def}
\mathcal{B}[N]=\begin{cases}&\mathcal{B}[N-1]+1, \hspace{1mm} \text{w.p. 0.5}\\
&\mathcal{B}[N-1]-1, \hspace{1mm} \text{w.p. 0.5}
\end{cases}.
\end{equation}

\eqref{eq:brown_mot_def} dictates that the time step and the space step are related to each other through the diffusion coefficient, $D$, i.e.,
\begin{equation}
\label{eq:D_def}
D=\frac{\Delta{x}^2}{2\Delta{t}}.
\end{equation}

At the partially reflected boundary, $\mathcal{B}$ is either absorbed with a probability $P_A$ or reflected. If reflected, $\mathcal{B}$ remains at its previous position, i.e.,
\begin{align}
\mathcal{B}[N] = \mathcal{B}[N-1].
\end{align}

While we use the fixed jump model, there are other models. According to the variable jump model, $\mathcal{B}[N]$ is 
\begin{equation}
\label{eq:alt_brown_def}
\mathcal{B}[N] = \mathcal{B}[N-1] + \delta\sqrt{2D\Delta{t}},
\end{equation}
where $\delta$ is a random variable obeying standard normal distribution \cite{erban2007reactive}. We refer to this model in Sec. \ref{sec:reflect_cont} and Sec. \ref{sec:rateabsicm} but use the fixed jump model consistently in the entire work.

Our focus in this work is to model the second incidence rate. Accordingly, unless otherwise stated, we assume that the first incidence occurs at $t=0$. We use the rate of first incidence as derived in \cite{koca2022channel} when needed.

Throughout this work, we use Monte-Carlo simulations to support the theoretical results we present. Table \ref{tab:sim_param} provides the default values of the independent simulation parameters and the relations with which the dependent parameters are derived. We use the default values listed in Table \ref{tab:sim_param} in our simulations, unless we state otherwise. To assist the reader, we also list the recurring notations in Table \ref{tab:text_sym}. While we define each symbol in text when they are first referred to, we believe a reference table could be beneficial to the reader. 

\begin{table}[t]
\caption {Base parameters for simulations}
\centering
\begin{tabular}{l|ccc}
\textbf{Variable}              & \textbf{Value}        & \textbf{Unit} & \textbf{Relation} \\ \thickhline
\textbf{Diffusion Coefficient}\hspace{6.0mm} ($D$)              & $1$  & $\text{m}^{2}\text{s}^{-1}$ & \\
\textbf{Domain Length}\hspace{11.9mm} ($2L$)                       & $4$    & $\text{m}$     &     \\
\textbf{Space Step}\hspace{17.7mm} ($\Delta{x}$)                 & $0.05$ & $\text{m}$   & \\
\textbf{Absorption Rate}\hspace{10.8mm} ($\lambda$)                 & $0.1$ & $\text{m}\text{s}^{-1}$   & \\
\textbf{Space Discretisation}\hspace{6.55mm} ($H$)                 & &    & \eqref{eq:L_def}\\
\textbf{Time Step}\hspace{18.5mm} ($\Delta{t}$)                 & & $\text{s}$   & \eqref{eq:D_def}\\
\textbf{Simulation Time}\hspace{10.7mm} ($t$)                 & & $\text{s}$   & \\
\textbf{Time Discretisation}\hspace{7.3mm} ($N$)                 & &    & \eqref{eq:N_def}\\
\textbf{Absorption Probability}\hspace{2.85mm} ($P_A$)                 & &    & \eqref{eq:erban_7}\\

\end{tabular}
\label{tab:sim_param}
\end{table}

\begin{table}[t]
\caption {List of Notations for Recurring Functions}
\centering
\setlength{\tabcolsep}{8pt}
\begin{tabular}{ll}
\textbf{Description} & \textbf{Symbol}\\ \thickhline
\textbf{Probabilities in semi-infinite domain}& \\
\hspace{4mm} {Absorption Prob. at $N$ with $\mathcal{B}[0]=H$} & $\mathrm{P^{(S)}}[M=N]$  \\
\hspace{4mm} {Survival Prob. at $N$ with $\mathcal{B}[0]=H$} & $\mathrm{P^{(S)}}[M>N]$  \\
\hspace{4mm} {Survival Prob. at N with $\mathcal{B}[0]=H-x^\prime$}& $\mathrm{P_{x^\prime}^{(S)}}[M>N]$  \\
\textbf{Probabilities in finite domain} & \\
\hspace{4mm} {Survival Prob. at $N$ with $\mathcal{B}[0]=H$} & $\mathrm{P}[M>N]$  \\
\hspace{4mm} {Survival Prob. at $N$ with $\mathcal{B}[0]=H-x^\prime$}& $\mathrm{P_{x^\prime}^{(S)}}[M>N]$ \\
\textbf{Rates in Discrete Time} & \\
\hspace{4mm} {Rate of  Incidence} & $R[N]$               \\
\hspace{4mm} {Rate of {1\textsuperscript{st}}  Incidence with $\mathcal{B}[0]=0$}& $R_1[N]$  \\
\hspace{4mm} {Rate of {2\textsuperscript{nd}} Incidence with $\mathcal{B}[0]=H$}& $R_2[N]$  \\
\hspace{4mm} {Rate of {2\textsuperscript{nd}}  Incidence with $\mathcal{B}[0]=x^\prime$}& $R_2[x^\prime,N]$  \\
\hspace{4mm} {Rate of Higher Order Incidences with $\mathcal{B}[0]=H$} & $R_h[N]$   \\
\hspace{4mm} {Rate of Absorption}& $R_A[N]$  \\
\end{tabular}
\label{tab:text_sym}
\end{table}

\section{Second Incidence Rate with one Boundary}
\label{sec:no_reflect}

In this section, we start with a semi-infinite domain with a partially reflective boundary at $x=H$, i.e., $\mathcal{B} \in (-\infty, H]$. By definition, to find the rate of second incidence, we assume that $\mathcal{B}$ was incident at $x=H$. Thus, without loss of generality, we can shift $\mathcal{B}$ in time steps such that $\mathcal{B}[0] = H-0.5$.  In this section, since we only focus on the second incidence rate, we choose the probability of absorption, $P_A$ as 1. 

Assuming the second incidence happens at time step $M$, we calculate $\mathrm{P^{(S)}}[M>N]$, the probability that the second incidence has not yet occurred in the next $N$ time steps, where
\begin{align}
\label{eq:N_def}
N=\frac{t}{\Delta{t}}.
\end{align}

Let $\mathcal{B}$ take $n_l$ steps to the $-x$ direction and $n_r$ steps to the $+x$ direction, satisfying $n_r + n_l = N$. Since we know that for $n_r > n_l$, $\mathcal{B}$ necessarily hits $x=H$ at least once within the first $N$ time steps. Thus, we choose $n_l \geq n_r$. Using Bertrand's Ballot Theorem with ties allowed, we reach
\begin{align}
\label{eq:bertnard}
\mathrm{P^{(S)}}[M>n_r+n_l|n_r,n_l]=\frac{n_l-n_r + 1}{n_l + 1}.
\end{align}

Since ties are allowed, \eqref{eq:bertnard} includes cases where $\mathcal{B}$ returns back to $x=H-0.5$ but $\mathcal{B}$ never exceeds $H$. Moreover, the incidences occur only at odd values of $N$, due to the $\mathcal{B}[0]=H-0.5$ initial condition.

Since there are $2^N$ sequences of $N$-long $+x$ and $-x$ movements, the probability that $\mathcal{B}$ is never incident on the boundary within the first $N$ time steps is
\begin{align}
\label{eq:bertrand_sum}
\mathrm{P^{(S)}}[M>N] &=\frac{1}{2^N}\sum_{k=n}^{N}\binom{N}{k}\frac{2k-N+1}{k+1},
\end{align}
where 
\begin{equation}
n = \left\lceil \frac{N}{2} \right\rceil.
\end{equation}

The binomial term in \eqref{eq:bertrand_sum} provides the number of sequences with $k$ moves in the $+x$ direction and the fractional term gives the probability that a given sequence with $k$ moves in the $+x$ direction satisfies \eqref{eq:bertnard}. The choice of limits for the summation reflects that $k$, the number of steps taken in the $+x$ direction, is larger than $N-k$, the number of steps in the $-x$ direction.

Starting with \eqref{eq:bertrand_sum}, we continue as
\begin{align}
\mathrm{P^{(S)}}[M>N] &= \frac{1}{2^{N}}\sum_{k=n}^{N}\binom{N}{k}\left(2-\frac{N+1}{k+1}\right),\\
\label{eq:bertrand_1}
&= \frac{1}{2^{N}}\sum_{k=n}^{N}2\binom{N}{k}-\binom{N+1}{k+1},\\
\label{eq:bertrand_exact}
&= \frac{1}{2^{2n}}\binom{2n}{n}.
\end{align}

Now, we can use \eqref{eq:bertrand_exact} to find $\mathrm{P^{(S)}}[M=N]$, i.e.,
\begin{align}
\mathrm{P^{(S)}}[M=N]&=\mathrm{P^{(S)}}[M>N-2]-\mathrm{P^{(S)}}[M>N],\\
&=\frac{1}{2^{N-2}}\binom{N-1}{n-1}-\frac{1}{2^{N}}\binom{N+1}{n},\\
\label{eq:binom_der}
&=\frac{1}{2^{N-2}}\binom{N-2}{n-1}\frac{1}{N},
\end{align}
for odd $N$ and $\mathrm{P^{(S)}}[M=N]=0$ for even $N$. 

Using the Stirling approximation of the form 
\begin{align}
\label{eq:pr_full_log2}
\log(N!)=\frac{1}{2}\ln(2\pi N)+N\ln(N)-N,
\end{align}
\eqref{eq:bertrand_exact} turns into
\begin{align}
\mathrm{P^{(S)}}[M>2n]&\approx \frac{1}{2^{2n}}\frac{\sqrt{2\pi2 {n}}\left(\frac{2n}{e}\right)^{2n}}{\left(\sqrt{2\pi n}\left(\frac{n}{e}\right)^{n}\right)^2},\\
\label{eq:bertrand_pareto}
&\approx \sqrt{\frac{1}{\pi n}}.
\end{align}

Similarly, \eqref{eq:binom_der} becomes
\begin{align}
\mathrm{P^{(S)}}[M=2n+1]&\approx \frac{1}{2n\sqrt{\pi (n-1)}},\\
\label{eq:bertrand_pareto1}
&\approx \sqrt{\frac{1}{4\pi n^3}}.
\end{align}

Note that both \eqref{eq:bertrand_pareto} and \eqref{eq:bertrand_pareto1} are as good as the Stirling approximation, i.e., they are inaccurate for small $m$. Thus, for smaller values, it might be better to use \eqref{eq:bertrand_exact} and \eqref{eq:binom_der} rather than their approximations.

\begin{figure}[t]
	\centering	
	\includegraphics[width=3.5in, trim={0cm 0 0cm 0}]{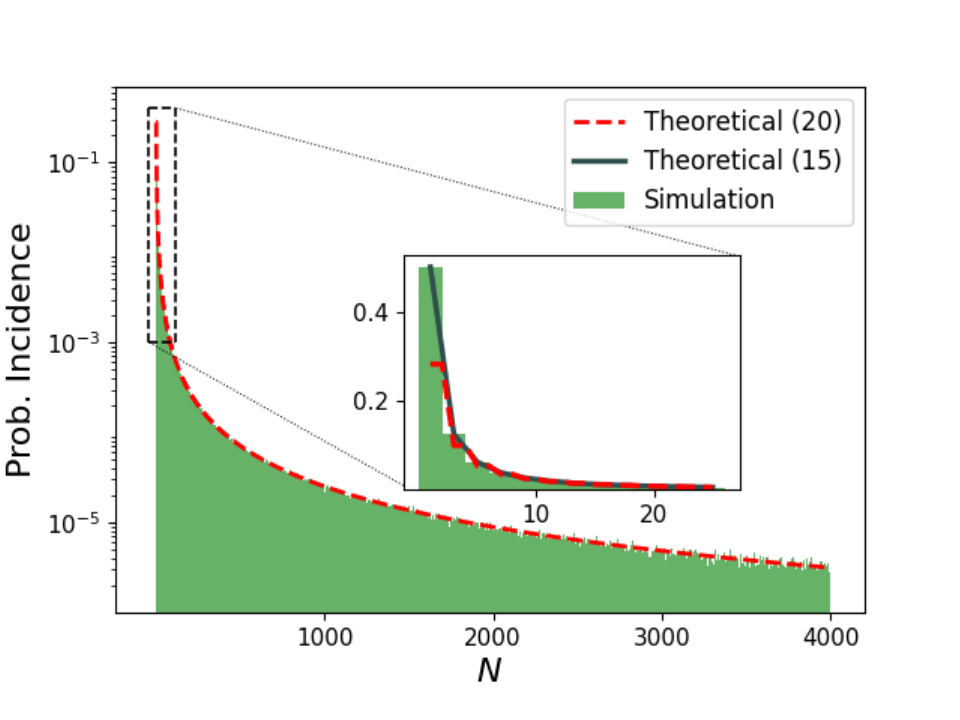}
	\vspace{-5mm}
	\caption{Probability of hitting at time step $N$ vs. time steps following the first hit. We employ semi-log plot for the outer figure, to increase the visibility of the data for large $N$, while the zoomed in plot is linear. We iterate the simulation for 20000000 times to obtain the simulation results. To simplify the plot, we only plot the simulation results only for odd $N$, i.e., we do not show the null hit probability at even $N$. Similarly, we choose $\mathrm{P^{(S)}}[N] = \mathrm{P^{(S)}}[N-1]$ for even $N$ to prevent zigzagging of the theoretical results.}
	\label{fig:eq20}
\end{figure}

Fig. \ref{fig:eq20} illustrates the agreement between the simulations and \eqref{eq:bertrand_pareto1}. As expected, for small $N$ for which the Stirling approximation is inaccurate, \eqref{eq:bertrand_pareto1} does not return particularly accurate results. However, for large $N$, our theory is in unison with the simulation results. Also evident from Fig. \ref{fig:eq20}, \eqref{eq:binom_der} gives the exact incidence probability. Since binomial terms grow exponentially, \eqref{eq:binom_der} has limited use. Regardless, we can use both \eqref{eq:binom_der} and \eqref{eq:bertrand_pareto1} for small and large $N$ respectively to get an accurate and robust result.

Note that the second incidence probability is given by a Pareto distribution with a scale parameter $x_m=(1/2\pi)$ and a shape parameter $\alpha=0.5$, where \eqref{eq:bertrand_pareto} and \eqref{eq:bertrand_pareto1} are the the cumulative and probability distribution functions respectively. Pareto distribution with such parameters is heavy-tailed, i.e., the expectation value of the second incidence time step is infinite. Although seemingly counter-intuitive, this is expected, as $\mathcal{B}$ makes larger and larger swings into $-\infty$ direction as long as $\mathcal{B}$ survives.

\section{Time to Second Incidence in a Finite Domain}
\label{sec:reflect_cont}

Due to the expected second hitting time being infinity, the results that we obtained in Sec. \ref{sec:no_reflect} does not offer much help in modelling realistic systems, as $\mathcal{B}$ traversing to and from either $-\infty$ or $\infty$ is not realistic. In finance, we know that price of a stock would crash at zero. In MC, IcM would reflect from the sides of the {\it finite} container it is in.

To find the second incidence rate for a realistic system, as we describe in Sec. \ref{sec:sys}, we move to a finite system with boundaries located at $x=\pm H$. Since we focus on the second incidence, we assume that the first incidence already happened and $\mathcal{B}$ is now located at $L-\Delta{x}$. We also choose the boundary at $x=\pm H$ are absorbing as  the second incidence occurs when $\mathcal{B}$ is incident on $x=\pm H$, and any further incidences are no longer second incidence but higher order incidences.

Note that we choose $L-\Delta{x}$ instead of $L-\Delta{x}/2$, as described in Sec. \ref{sec:sys}. The reasoning behind our choice is that simulations cannot capture half time steps, i.e., starting at $L-\Delta{x}/2$, $\mathcal{B}$ is incidence at $x=H$ at $\Delta{t}/2$ with a probability of $0.5$. However, the computer simulations cannot capture the half time steps. Thus, we altered the $\mathcal{B}[0]$ to fit the simulation results to the theory.

\begin{figure}[!b]
	\centering
	\includegraphics[width=3.6in, trim={0.0cm 0 0.0cm 0}]{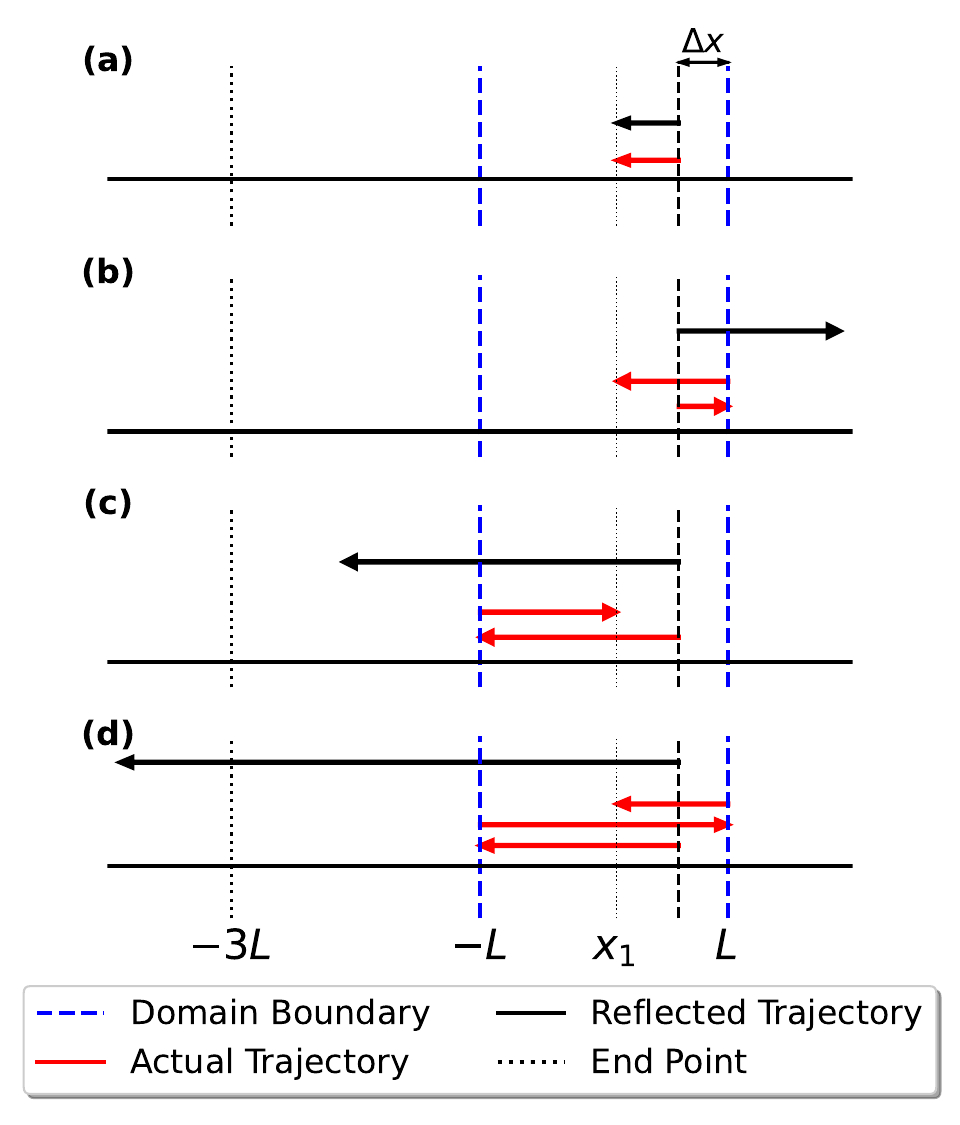}
	\caption{Illustration of the reflection principle, as outlined in  \cite{koca2022channel}. Each subfigure shows one of the countably infinite paths $\mathcal{B}$ can use, starting at $x=L-\Delta{x}$ and ending at $x=x_1$.}
	\label{fig:transformed}
\end{figure}

Using the start point and the end point of $\mathcal{B}$ does not divulge enough information whether $\mathcal{B}$ were incident on the boundaries. It may return back to the domain after it crossed the boundary at one, or many, incidences. To account for the crossings, we use the reflection principle \cite{koca2022channel}. We illustrate the reflection principle in Fig. \ref{fig:transformed}. Among the four possible paths $\mathcal{B}$ might take, only the path outlined in Fig. \ref{fig:transformed}(a) does not cross any of the boundaries. In Fig. \ref{fig:transformed}(b), $\mathcal{B}$ crosses the boundary at $x=H$, in (c) at $x=-H$, and in (d) first at $x = -H$ and at $x=H$, even though they all conclude at the same point in space.

To discount paths involving boundary crossings, we adjust (11) of \cite{koca2022channel} as described in Sec. 4(b) of the same work. The probability that the time steps for the next hit is smaller than $M$ is 
\begin{align}
\label{eq:siegmund5}
\begin{split}
\mathrm{P}&[N<M]=1-\sum_{i=-\infty}^\infty(-1)^{i}\times\\ &\phantom{ooo}\left[\Phi\left(\frac{2iH +1}{\sqrt{N}}\right)-\Phi\left(\frac{(2i-2)H + 1}{\sqrt{N}}\right)\right],
\end{split}
\end{align}
where $\Phi(.)$ is the standard normal cumulative distribution function.

Assuming $H\gg 1$, we use Taylor expansion on \eqref{eq:siegmund5}, i.e.,
\begin{align}
\nonumber
\mathrm{P}&[N<M]\!=\!1-\sum_{i=-\infty}^\infty(-1)^{i}\!\times \! \left[\Phi\left(\frac{2iH}{\sqrt{N}}\right)\!+\!\frac{1}{\sqrt{N}}\phi\left(\frac{2iH}{\sqrt{N}}\right)\right.\\
\label{eq:siegmund6} 
&\phantom{oooo}\left.-\Phi\left(\frac{(2i-2)H}{\sqrt{N}}\right)-\frac{1}{\sqrt{N}}\phi\left(\frac{(2i-2)H}{\sqrt{N}}\right)\right],
\end{align}
where $\phi(.)$ is the standard normal probability distribution function.

We first manipulate the $\Phi(.)$ terms of \eqref{eq:siegmund6} as,
\begin{align}
\label{eq:non_div0}
Z &=\! \sum_{i=-\infty}^\infty(-1)^{i}\times \left[\Phi\left(\frac{2iH}{\sqrt{N}}\right)-\Phi\left(\frac{(2i-2)H}{\sqrt{N}}\right)\right],\\
\label{eq:non_div}
&= \lim_{Q\rightarrow \infty} 2\sum_{i=-Q}^Q  (-1)^{i}  \Phi\left(\frac{2iH}{\sqrt{N}}\right), \\
\label{eq:non_div2}
& = \lim_{Q\rightarrow \infty}1\!+\!2\!\sum_{i=1}^Q (-1)^{i}\!\left[\Phi\left(\frac{2iH}{\sqrt{N}}\right)\!+\!\Phi\left(\frac{-2iH}{\sqrt{N}}\right)\right].
\end{align}

Since we know that $\Phi(\alpha) + \Phi(-\alpha) =1$, \eqref{eq:non_div0} clearly diverges. However, we realise that $\Phi(.)$ terms vanish as long as the limits of integration are symmetric. Thus, we rearrange \eqref{eq:non_div} to \eqref{eq:non_div2} where each term sums to zero, allowing us to calculate the Cesaro sum of the series as zero. After $\Phi(.)$ terms vanish, \eqref{eq:siegmund6} becomes
\begin{align}
\label{eq:siegmund7} 
\mathrm{P}[N<M]&=\frac{1}{\sqrt{N}}\sum_{i=-\infty}^\infty(-1)^{i}\left[\phi\left(\frac{2iH}{\sqrt{N}}\right)\!+\!\phi\left(\frac{-2iH}{\sqrt{N}}\right)\right],\\
&=\frac{2\phi(0)}{\sqrt{N}} + \frac{4}{\sqrt{N}}\sum_{i=1}^{\infty} (-1)^i \phi\left(\frac{2iH}{\sqrt{N}}\right),\\
\label{eq:vartheta0} 
&=\sqrt{\frac{2}{\pi N}}\vartheta_4(0;q),
\end{align}
where 
\begin{equation}
\label{eq:q_def}
q = \exp\left(-\frac{2H^2}{N}\right)
\end{equation}
and $\vartheta_4(z;q)$ is the fourth Jacobi Theta function of the form
\begin{equation}
\label{eq:var_def}
\vartheta_4(z;q) = 1 + 2\sum_{i=1}^\infty (-q)^{i^2}\cos(2iz).
\end{equation}

As we stated in Sec. \ref{sec:no_reflect}, $\mathcal{B}$ cannot be incident on the boundary at even $N$, however \eqref{eq:vartheta0} returns a positive value for both even and odd $N$. Thus, to ensure that $\mathrm{Pr}[M>N]=\mathrm{Pr}[M>N-1]$ for even $N$, we modify \eqref{eq:vartheta0} as
\begin{align}
\label{eq:vartheta00} 
\mathrm{P}[N<M]=
\sqrt{\frac{2}{\pi (N+\eta)}}\vartheta_4\left(0;\exp\left(\frac{-2H^2}{N+\eta}\right)\right),
\end{align}
where 
\begin{align}
\eta \equiv (N+1) \hspace{-2mm}\pmod{2}.
\end{align}

With Fig. \ref{fig:eq30}, we present the parity of the simulation results and \eqref{eq:vartheta00}. The discrepancy for small $N$, clearly visible in Fig. \ref{fig:eq30}(a) is stemmed from the fact that the starting point for \eqref{eq:vartheta00} uses the Brownian Motion defined with \eqref{eq:alt_brown_def}, instead of \eqref{eq:brown_mot_def}. However, for large $N$, both models converge as expected. Thus, \eqref{eq:vartheta00} provides a very accurate picture to the probability of not hitting the boundary.

\begin{figure}[t]
	\centering	
	\includegraphics[width=3.8in, trim={1.2cm 0 0cm 0}]{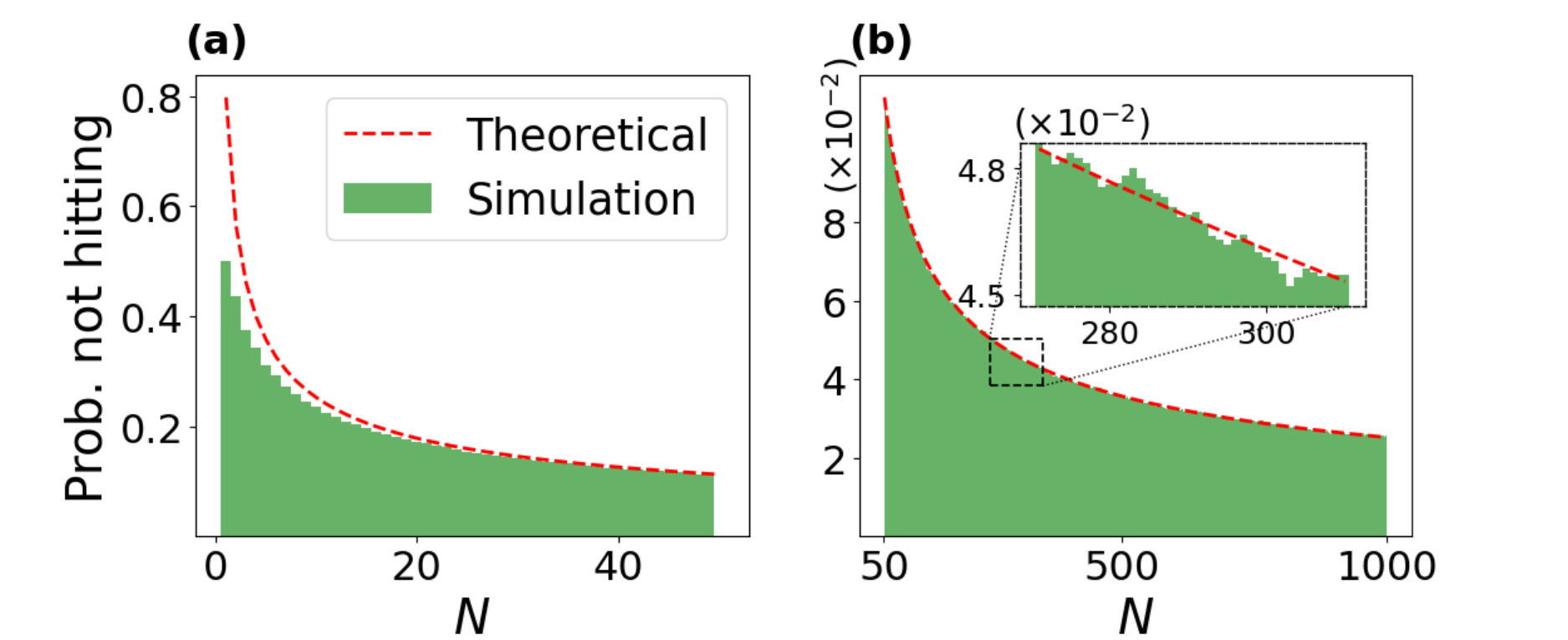}
	\vspace{-5mm}
	\caption{The probability of not hitting vs. number of timesteps following the first hit for $H=200$. We use 500000 iterations to obtain the simulation results. Although \eqref{eq:vartheta00} provides values for even $N$, it is impossible to hit the boundary at even timesteps. Thus, we only plotted the simulation results of odd $N$.}
	\label{fig:eq30}
\end{figure}

We find the rate of second incidence as the derivative of \eqref{eq:vartheta00}, i.e.,
\begin{align}
\label{eq:der_start}
R_2[N] &= - \frac{\mathrm{d}}{\mathrm{d}N} \mathrm{Pr}[N<M], \\
&=\frac{1}{\sqrt{2\pi N^3}}\vartheta_4(0;q)+\sqrt{\frac{2}{\pi N}}\frac{\mathrm{d}q}{\mathrm{d}N}\frac{\mathrm{d}\vartheta_4(0;q)}{\mathrm{d}q},\\
\label{eq:rate0}
&=\!\frac{1}{\sqrt{2\pi N^3}}\left[-\vartheta_4(0;q)\!+\!\frac{4H^2}{N}\sum_{i=1}^\infty (-1)^{i}i^2q^{i^2}\right].
\end{align}

Although \eqref{eq:rate0} does not have a closed expression, we know that
\begin{align}
\frac{\mathrm{d}}{\mathrm{d}q}\vartheta_4(0;q) = -\gamma\frac{1}{q} \frac{\mathrm{d^2}}{\mathrm{d}z^2}\vartheta_4(z;q)\Bigg|_{z=0},
\end{align}
which is evaluated very efficiently by computers. Note that $\gamma$ is a constant that depends on the exact definition of $\vartheta_4(q;z)$. With the definition in \eqref{eq:var_def}, $\gamma = 1/4$, however if the summands are in the form $(-q)^{i^2}\cos(2\pi iz)$, then $\gamma = 1/(4\pi^2)$. 

\section{Corrections for the Second Incidence Rate with Two Boundaries for Small $N$}
\label{sec:reflect}

As we discuss in Sec. \ref{sec:reflect_cont}, the accuracy of \eqref{eq:rate0} is limited to large $N$. In this section, we solve the second incidence problem for small $N$.

In this section, we find $\mathrm{P^{(S)}_{x^\prime}}$, the probability of $\mathcal{B}$ not hitting the boundary in a semi-infinite domain, if $\mathcal{B}[0]=H-x^\prime$. In other words, we extend  \eqref{eq:bertrand_exact} to the cases where $\mathcal{B}$ is not adjacent to the boundary at the start. To this end, we first find a recursive relation and then a closed-form expression. Due to the size of their proofs, we state them as theorems and leave the proofs to the Appendix.

\begin{theorem}
\label{thm:recurse}
{\bf (Extended Ballot Theorem)} Let $\mathcal{B}$ be a Brownian Motion with $\mathcal{B}[0]=H-x^\prime$. The probability that $\mathcal{B}$ does not hit $x=H$ after $N$ time steps can be found with the recursive relation
\begin{align}
\hspace{-1mm}\label{eq:process_recursive}
\mathrm{P^{(S)}_{x^\prime}}[M>N] \!&=\! 2\mathrm{P^{(S)}_{x^\prime-1}}[M>N+1]\! -\! \mathrm{P^{(S)}_{x^\prime-2}}[M>N].
\end{align}
\end{theorem}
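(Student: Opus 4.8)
The natural tool here is a one-step decomposition of the walk, conditioning on the direction of the very first jump and then invoking the Markov property. Write $f(x^\prime,N) \equiv \mathrm{P^{(S)}_{x^\prime}}[M>N]$ for the survival probability of a walk that starts a distance $x^\prime$ from the boundary at $x=H$ and has not yet been incident on it after $N$ steps. Because the fixed-jump dynamics \eqref{eq:brown_mot_def} are spatially homogeneous away from the boundary, $f$ depends on the start position only through the distance $x^\prime$; a walk started at $H-x^\prime$ whose first step is toward (away from) the boundary is, after that step, exactly a walk started at distance $x^\prime-1$ (respectively $x^\prime+1$) with one fewer step remaining.

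From this, conditioning on the first jump, which lands at distance $x^\prime-1$ or $x^\prime+1$ each with probability $1/2$, and applying the Markov property gives the one-step recursion
\begin{align}
\label{eq:plan_backward}
f(x^\prime,N) = \tfrac12 f(x^\prime-1,N-1) + \tfrac12 f(x^\prime+1,N-1),
\end{align}
valid under the boundary convention $f(0,\cdot)=0$, since a walk that has reached the boundary has already been incident and cannot survive. Here I use that $P_A=1$, so a step from the boundary-adjacent start is an incidence rather than a reflection.

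The stated identity then follows by a reindexing rather than any new estimate. Substituting $x^\prime \mapsto x^\prime-1$ and $N\mapsto N+1$ in \eqref{eq:plan_backward} yields
\begin{align}
\label{eq:plan_shift}
f(x^\prime-1,N+1) = \tfrac12 f(x^\prime-2,N) + \tfrac12 f(x^\prime,N),
\end{align}
and solving \eqref{eq:plan_shift} for $f(x^\prime,N)$ produces exactly \eqref{eq:process_recursive}. The point of the shift is to trade the ``step away from the boundary'' term $f(x^\prime+1,N-1)$, whose distance index increases, for the downward pair $\{x^\prime-1,x^\prime-2\}$ at the cost of advancing the horizon by one; this is what lets the recursion descend in $x^\prime$ toward the already-known base case \eqref{eq:bertrand_exact}. (One could alternatively reprove this by a reflection/ballot count generalising \eqref{eq:bertnard}, but the one-step route is cleaner and avoids re-deriving the binomial sums.)

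The main obstacle is not the algebra but the careful bookkeeping at and near the boundary. I would need to pin down precisely the half-step/absorbing convention of Sec.~\ref{sec:sys} so that ``incidence'' is unambiguous, verify the base cases $x^\prime=0$ and $x^\prime=1$ against that convention, and confirm the parity consistency of \eqref{eq:plan_shift}: an incidence from distance $x^\prime$ in $N$ steps requires $N\equiv x^\prime \pmod{2}$, and all three terms in \eqref{eq:plan_shift} share this parity, so none is spuriously forced to zero. Finally I would note that \eqref{eq:process_recursive} as written presumes $x^\prime\ge 2$, with the $x^\prime=0,1$ values furnishing the initial data that anchor the recursion.
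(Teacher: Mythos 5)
Your proof is correct and is essentially the paper's own argument: the paper's Appendix A works with the surviving-path sets $\mathcal{S}(N,a)$ and uses precisely the first-step decomposition $\lvert \mathcal{S}(N+1,x^\prime-1)\rvert = \lvert \mathcal{S}(N,x^\prime)\rvert + \lvert \mathcal{S}(N,x^\prime-2)\rvert$ (checked explicitly for $x^\prime=2,3$, then asserted in general) to obtain \eqref{eq:process_recursive}, which is exactly your shifted one-step recursion written in counting form. Your probabilistic phrasing via conditioning on the first jump, the Markov property, the boundary convention at distance $0$, and the reindexing step is the same decomposition, and if anything makes the paper's ``it is evident'' general step more explicit.
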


While \eqref{eq:process_recursive} is sufficient to find $\mathrm{P^{(S)}_{x^\prime}}[M>N]$, it is not computationally efficient for large $x^\prime$. When $\mathcal{B}[0]$ is very close to one boundary, it is far from the other one. In other words, small $x^\prime$ implies a large $2H-x^\prime$, which necessitates a computationally efficient method. Hence, we derive a closed form expression for \eqref{eq:process_recursive} with Theorem \ref{thm:hypergeom}.

\begin{theorem}
\label{thm:hypergeom}
The probability that $\mathcal{B}$ with $\mathcal{B}[0] =H-x$ does not hit $x=H$ after $N$ time steps is
\begin{align}
\label{eq:hypergeom}
\mathrm{P^{(S)}_{x^\prime}}&[M>N] =\frac{2^{\xi^\prime}(\xi+1)^{\xi^\prime}(-1)^\xi}{\sqrt{\pi}}\frac{\Gamma\left(\Delta+\frac{1}{2}\right)}{\Gamma\left(\Delta+1\right)} \times \\
\nonumber
&\phantom{ooo}{}_3F_2\left(-\xi,\xi+\xi^\prime+1,\Delta + \frac{1}{2};\frac{1}{2}+\xi^\prime,\Delta+1;1\right),
\end{align}
where 
$\Gamma$ is the Gamma function and $_3F_2$ is the hypergeometric function of the form
\begin{align}
\label{eq:hypergeom_def}
_3F_2(a_1,a_2,a_3;b_1,b_2;f)=\sum_{i=0}^\infty \frac{(a_1)_i(a_3)_i(a_3)_i}{(b_1)_i(b_2)_i}\frac{f^i}{i!},
\end{align}
with $(a)_i$ being the Pochhammer symbol, i.e.,
\begin{align}
(a)_i = \begin{cases} 1, &\mathtt{if }\phantom{o} i=0,\\
\prod\limits_{j=0}^{i-1}(a+j), &\mathtt{if}\phantom{o} i>0
\end{cases},
\end{align}
and
\begin{align}
\label{eq:def_xi1}
\xi &= \left\lceil \frac{{x^\prime}}{2}\right\rceil - 1,\\
\label{eq:def_xiprime}
\xi^\prime &\equiv {x^\prime}+1 \pmod{2},\\
\label{eq:def_xipprime}
\xi^{\prime\prime} &\equiv N + 1 \pmod{2},\\
\label{eq:def_xippprime}
\Delta &= \frac{N+\xi^{\prime}(1+\xi^{\prime\prime})}{2}.
\end{align}
\end{theorem}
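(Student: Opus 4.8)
The plan is to prove the closed form \eqref{eq:hypergeom} by strong induction on $x^\prime$, using the Extended Ballot Theorem (Theorem \ref{thm:recurse}) as the inductive engine rather than re-deriving the survival probability from scratch. I would phrase the inductive hypothesis at level $x^\prime$ as ``\eqref{eq:hypergeom} holds for every admissible $N$,'' since the recursion \eqref{eq:process_recursive} calls $\mathrm{P^{(S)}_{x^\prime-1}}$ at the shifted argument $N+1$, and a ``for all $N$'' hypothesis absorbs that shift cleanly. Because \eqref{eq:process_recursive} is second order in $x^\prime$, two base cases are needed; the natural choices are $x^\prime=1$ and $x^\prime=2$, which by \eqref{eq:def_xi1} both give $\xi=0$ (so the series \eqref{eq:hypergeom_def} truncates to its $i=0$ term and ${}_3F_2(\cdots;1)=1$) while differing in $\xi^\prime$ through \eqref{eq:def_xiprime}. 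In each case the claimed expression collapses to a ratio of Gamma functions, which I would check against the elementary reflection-principle count of paths that never reach the boundary, reducing to central binomial coefficients as in \eqref{eq:bertrand_exact}.

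Before the induction I would record the structural fact that makes \eqref{eq:hypergeom} well posed: the first upper parameter $-\xi$ is a non-positive integer, so the Pochhammer factor $(-\xi)_i$ in \eqref{eq:hypergeom_def} vanishes for $i>\xi$ and the series terminates after $\xi+1=\lceil x^\prime/2\rceil$ terms. Thus \eqref{eq:hypergeom} is really a finite alternating sum, consistent with its combinatorial origin, and the induction is an identity between finite sums rather than between genuine infinite series.

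For the inductive step I would substitute the assumed forms for $x^\prime-1$ (at $N+1$) and $x^\prime-2$ (at $N$) into \eqref{eq:process_recursive} and verify that the combination reproduces \eqref{eq:hypergeom} at $(x^\prime,N)$. The delicate part is that every index shift perturbs the auxiliary parities: sending $x^\prime\mapsto x^\prime-1$ flips $\xi^\prime$ and decrements $\xi$ only when $x^\prime$ is odd, whereas $x^\prime\mapsto x^\prime-2$ fixes $\xi^\prime$ but always decrements $\xi$; independently, $N\mapsto N+1$ flips $\xi^{\prime\prime}$ in \eqref{eq:def_xipprime} and thereby changes $\Delta$ through \eqref{eq:def_xippprime}. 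Consequently the identity must be established separately in the four parity classes of $(x^\prime \bmod 2,\,N \bmod 2)$, which is precisely the bookkeeping that inflates the length of the argument and motivates deferring it to the Appendix.

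Within each parity class, once the prefactors are matched the residual claim is a relation among three terminating ${}_3F_2(\cdots;1)$ series whose parameters differ by integer shifts. I would discharge it by the contiguous relations of the hypergeometric function, or equivalently by forming the ratio $t_{i+1}/t_i$ of consecutive summands and comparing it against the parameter pattern $(-\xi,\,\xi+\xi^\prime+1,\,\Delta+\tfrac12;\,\tfrac12+\xi^\prime,\,\Delta+1)$. I expect this contiguous-relation reduction, carried out uniformly across the parity cases while tracking the powers of two, the parity factor $(\xi+1)^{\xi^\prime}$, the sign $(-1)^{\xi}$, and the Gamma ratio $\Gamma(\Delta+\tfrac12)/\Gamma(\Delta+1)$ in the prefactor of \eqref{eq:hypergeom}, to be the main obstacle; the combinatorial scaffolding for the base cases and the inductive framing are routine by comparison.
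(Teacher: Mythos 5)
Your inductive scaffold is sound in principle and is genuinely different from the paper's proof. The paper does not verify the closed form by induction at all: it observes that the recursion \eqref{eq:process_recursive} of Theorem \ref{thm:recurse} is the recursion of the Chebyshev polynomials of the second kind, writes down the explicit solution directly as a finite sum $\sum_{i=0}^{\xi} (-1)^i 2^{x-1-2i}\binom{x-i-1}{i}\,\mathrm{P}_1^{(S)}[M>N+x-2i+\xi^\prime]$ whose coefficients are the Chebyshev coefficients, substitutes the ballot-theorem expression \eqref{eq:bertrand_exact} for $\mathrm{P}_1^{(S)}$, reindexes $i \leftarrow q-i$, applies Legendre's duplication formula $\frac{1}{2^{2z}}\binom{2z}{z}=\frac{\Gamma\left(z+\frac{1}{2}\right)}{\sqrt{\pi}\,\Gamma\left(z+1\right)}$, and then recognises the resulting finite alternating sum of Gamma ratios as the expansion of the terminating ${}_3F_2$. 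In other words, the hypergeometric form \emph{emerges constructively}; no hypergeometric identity is ever invoked. Your route instead treats \eqref{eq:hypergeom} as given and verifies it, which buys you freedom from having to spot the Chebyshev connection, but at the cost of having to prove a nontrivial functional identity. Your observation that $-\xi$ truncates the series, and your choice of base cases $x^\prime=1,2$ (both with $\xi=0$), match the structure correctly.

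The genuine soft spot is the inductive step itself. The three-term relation you must verify is not a standard contiguous relation: passing from $(x^\prime,N)$ to $(x^\prime-1,N+1)$ and $(x^\prime-2,N)$ simultaneously changes the upper parameter $\xi+\xi^\prime+1$, the lower parameter $\frac{1}{2}+\xi^\prime$, and shifts both $\Delta$-entries by half-integers, while also altering the prefactor $2^{\xi^\prime}(\xi+1)^{\xi^\prime}(-1)^{\xi}\Gamma(\Delta+\frac{1}{2})/\Gamma(\Delta+1)$. Gauss-type contiguous relations move one parameter by $\pm 1$ at a time; chaining them across four or five simultaneous shifts (in four parity classes) is not a routine reduction, and you give no argument that the chain closes. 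The realistic way to discharge that identity is to expand all three terminating ${}_3F_2$'s back into finite sums of binomial/Gamma terms and match coefficients --- but that is precisely the paper's sum manipulation run in reverse, so the induction would not actually save work; it would merely relocate the paper's derivation inside the inductive step while adding the parity case analysis on top. To make your proof complete, you would either need to exhibit the contiguous-relation chain explicitly or concede the coefficient-level expansion, at which point the constructive derivation (Chebyshev coefficients plus Legendre duplication) is both shorter and more informative, since it also explains where the closed form comes from.
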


We display the accuracy of \eqref{eq:hypergeom} with Fig. \ref{fig:eq46}. Contrary to \eqref{eq:vartheta00}, \eqref{eq:hypergeom} is not bound by any assumptions, making it accurate for the entire domain. However, for very large $N$, $_3F_2$ becomes computationally inefficient. Thus, for large $N$, we can use \eqref{eq:vartheta00} and \eqref{eq:rate0}, which are faster and without any noticeable inaccuracy as we discuss in Sec. \ref{sec:reflect_cont}.

Recalling our parity discussion on \eqref{eq:vartheta00} presented in Sec. \ref{sec:reflect_cont}, we stress that the parity is already built in \eqref{eq:hypergeom} via \eqref{eq:def_xiprime} and \eqref{eq:def_xipprime}. As a result, \eqref{eq:hypergeom} and \eqref{eq:rate0} do not allow incidence on the boundary if the even/odd parity of $x^\prime$ and $N$ are not equal. Thus, we do not need to further modify \eqref{eq:hypergeom} as we modify \eqref{eq:vartheta00}. Moreover, \eqref{eq:hypergeom} is computationally efficient compared to \eqref{eq:process_recursive}, making it useful for fast simulations.

\begin{figure}[t]
	\centering	
	\includegraphics[width=3.5in, trim={0cm 0 0cm 1.75cm}]{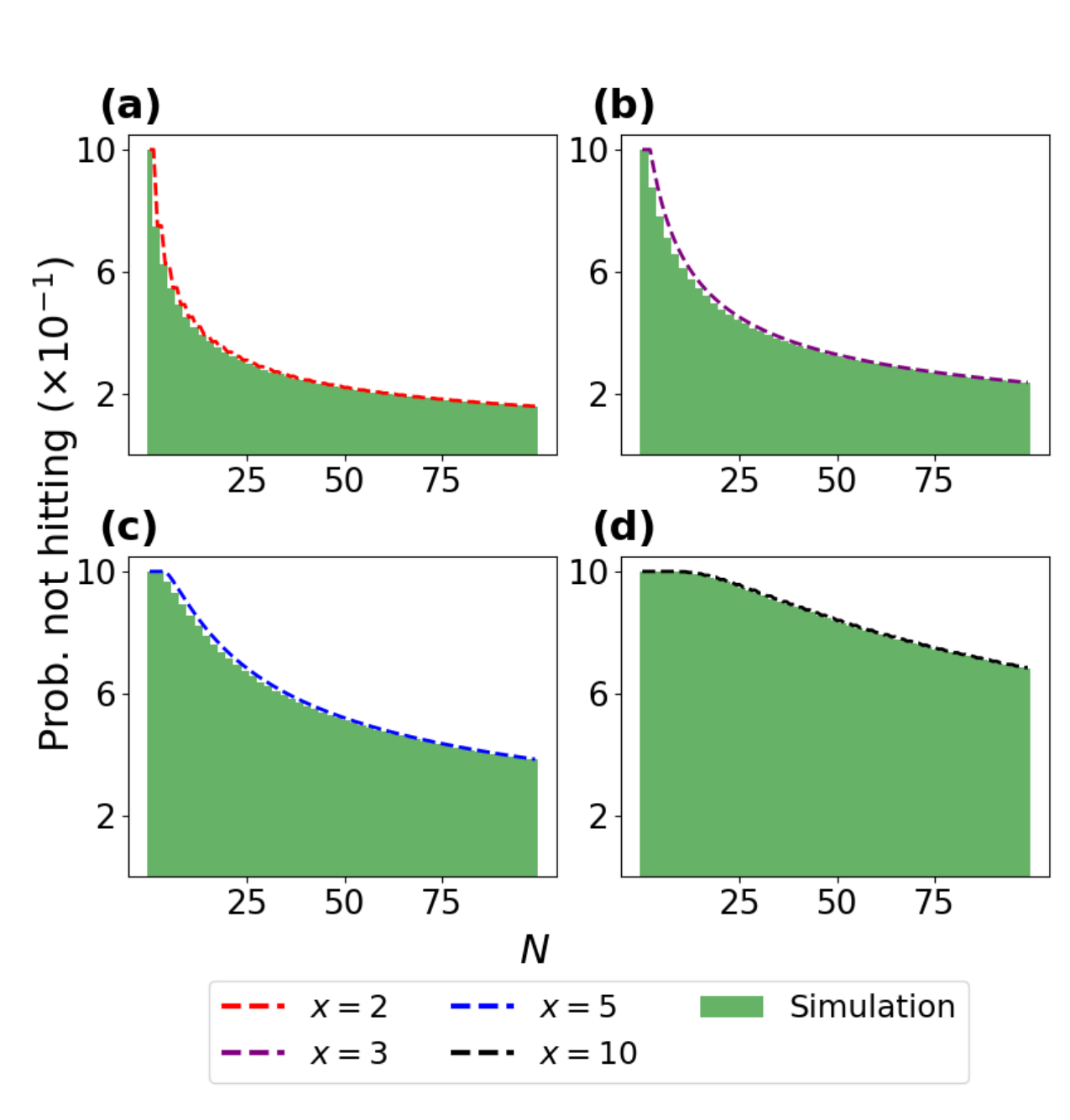}
	\vspace{-6mm}
	\caption{The probability of not hitting the boundary vs. number of timesteps assuming $\mathcal{B}$ starts $x$ distance from the boundary. We use 1000000 iterations each to obtain the simulation results. Similar to  Fig. \ref{fig:eq30}, it is impossible to hit the boundary at odd timesteps for even $x$ and at even timesteps for odd $x$. Thus, we only plotted the simulation results which has a nonzero probability to hit the boundary.}
	\label{fig:eq46}
\end{figure}

To find ${\mathrm{P}_{x^\prime}[M>N]}$, i.e., the probability that $\mathcal{B}$ stays in the $(-H,H)$ after $N$ time steps, we first note that \eqref{eq:hypergeom} include paths which allow $\mathcal{B}$ to traverse to $-\infty$ and return back. Since \eqref{eq:bertrand_exact} also suffers from the same limitation, as outlined in Fig. \ref{fig:transformed}, we use the same framework we outline in Sec. \ref{sec:reflect_cont} and reach
\begin{align}
\label{eq:pzf}
{P_{x^\prime}}[M>N] &= \sum_{i=0}^{m_1} (-1)^{i} \mathrm{P}^{(S)}_{x^\prime+2iH}[M>N]+\\  \nonumber&\phantom{oooooo}\sum_{i=0}^{m_2} (-1)^{i}\mathrm{P}^{(S)}_{2(i+1)H-x^\prime}[M>N],
\end{align}
where
\begin{align}
\label{eq:lim1}
m_1 & = \min (\{j|x^\prime+2jH>N\})-1,\\
\label{eq:lim2}
m_2 & = \min (\{j|2(j+1)H-x^\prime>N\})-1.
\end{align}

To calculate the rate of second incidence, we need to employ the difference equation on \eqref{eq:process_recursive}, as we derive \eqref{eq:binom_der} from \eqref{eq:bertrand_exact}. 
\begin{align}
\label{eq:px_rate}
\mathrm{P}_{{x^\prime}}[M=N] &=\mathrm{P}_{{x^\prime}}[M>N-2] - \mathrm{P}_{{x^\prime}}[M>N],
\end{align}
or alternatively,
\begin{align}
\label{eq:rate1}
R_2[x^\prime,N]=\frac{1}{2}\Big(\mathrm{P}_{{x^\prime}}[M>N-2] - \mathrm{P}_{{x^\prime}}[M>N]\Big).
\end{align}

Fig. \ref{fig:eq52} illustrates the accuracy of \eqref{eq:pzf}. As we state in the title of this section, we achieve our goal of making corrections to \eqref{eq:vartheta00} for small $N$.

\begin{figure}[t]
	\centering		\includegraphics[width=3.5in, trim={0cm 0 0cm 1.75cm}]{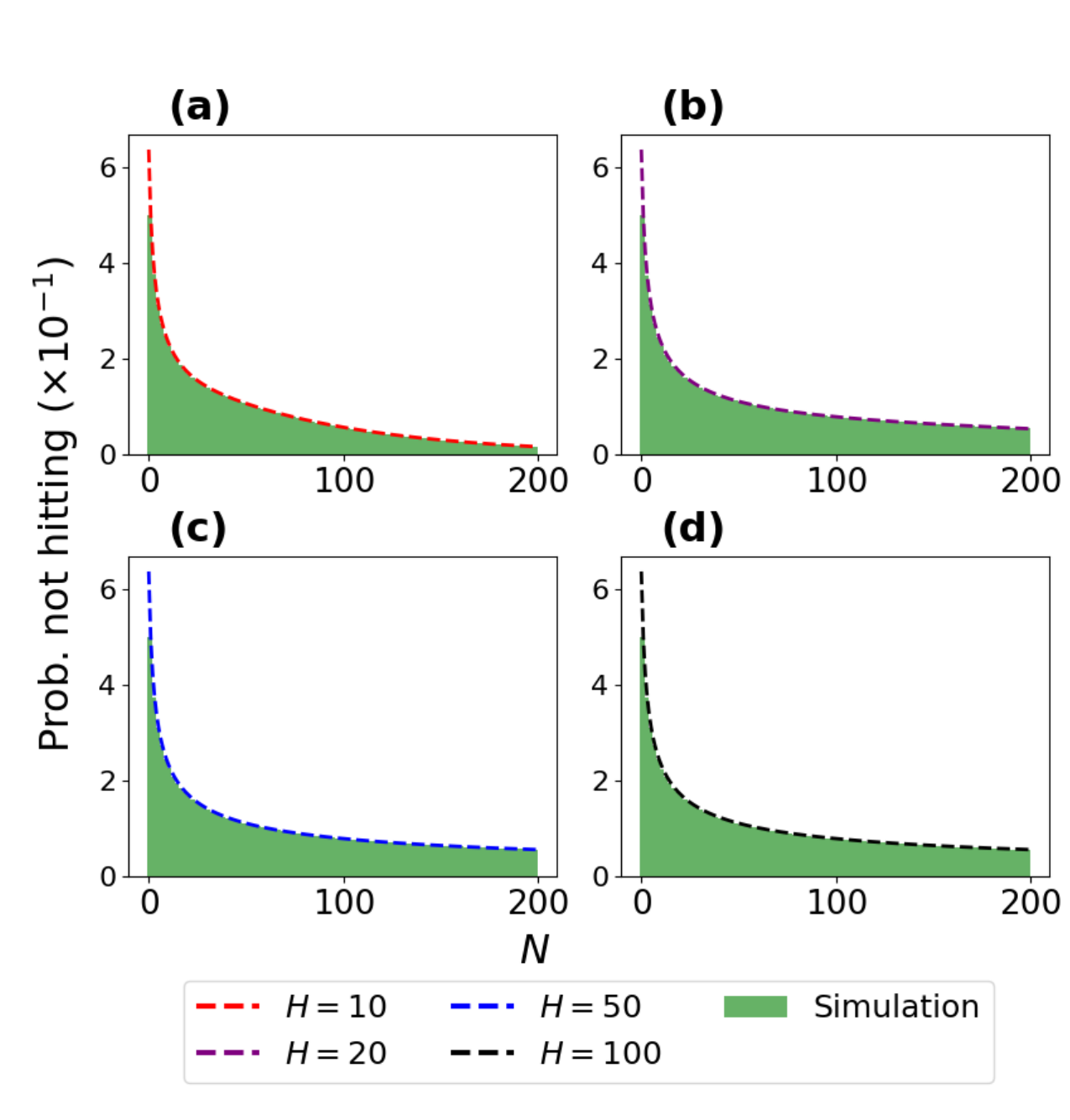}
	\vspace{-6mm}
	\caption{The probability of not hitting {\it any} boundary vs. number of timesteps assuming $\mathcal{B}$ starts one step away from the boundary. We use 1000000 iterations each to obtain the simulation results. The primary difference with the Fig. \ref{fig:eq46} is that here $\mathcal{B}$ can hit both $x=H$ and $x=-H$.}
	\label{fig:eq52}
\end{figure}

Note that, $i$, the summation index of \eqref{eq:pzf} represents the number of reflections. Accordingly, $i=0$ stands for no reflection and $i=1$ for the first summation represents reflection over $x=H$ and the second summation reflection over $x=-H$. The number of reflections in \eqref{eq:pzf} does not extend to infinity for finite $N$. Since we use \eqref{eq:vartheta00} for large $N$, summation limits are determined by the highest reflection number that requires less or equal steps than $N$. We determine the summation limits using \eqref{eq:lim1} and \eqref{eq:lim2}.

\section{Rate of Absorption in 1D}
\label{sec:rateabsicm}

In the previous sections, we derive expressions for the second incidence rate. In this section, we use these expressions to find the rate of removal of IcM from the medium.

We start with $R_h[N]$, the probability of higher order incidences on $\pm H$, for $\mathcal{B}$ starting at the edge of the domain. Using $R_2[N]$, i.e., the probability of second incidence,  obtained by either \eqref{eq:rate0} or \eqref{eq:rate1}, we reach the recursive relation
\begin{align}
\label{eq:incidence_ratef}
R_h[N] = R_2[N] + \sum_{i=1}^{N-1} R_2[i]R_2[N-1-i].
\end{align}

We can express \eqref{eq:incidence_ratef} as a difference equation, i.e.,
\begin{align}
\label{eq:incidence_ratef_diff}
R_h[N+1]-R_h &[N] = R_2[N+1] +\\
\nonumber
&\sum_{i=1}^{N}R_h[i]R_2[N-i+1].
\end{align}

\begin{figure}[!t]
	\centering	\includegraphics[width=3.5in]{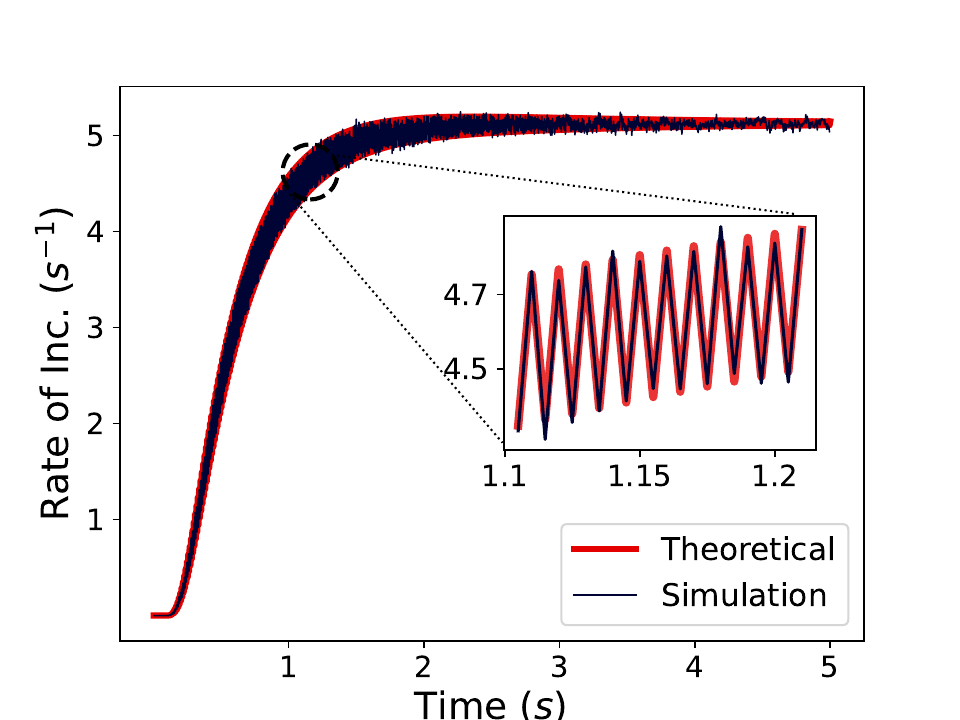}
	\vspace{-5mm}
	\caption{Incidence rate for a reflecting boundary with $\Delta{x} = 0.1$. Sample size is 2400000.}
	\label{fig:K1_s01}
\end{figure}

\begin{figure*}[!b]
	\centering	
	\includegraphics[width=7in]{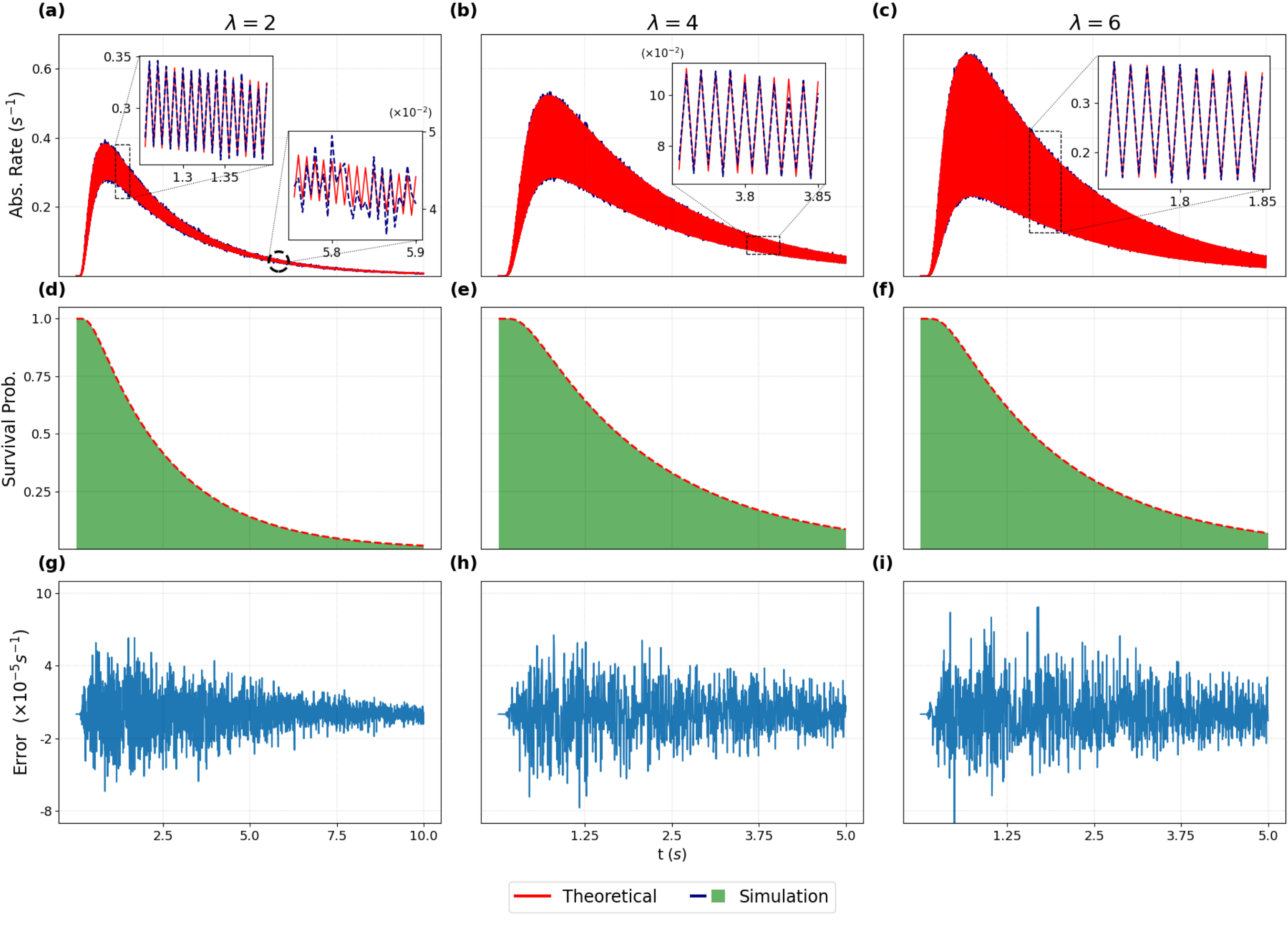}
	\vspace{-1mm}
	\caption{(a)-(c) Rate of absorption vs. time. (d)-(f) Survival probability vs. time. (g)-(i) Error in the rate of absorption vs. time. We choose $\Delta{x}=0.1$ and different $\lambda$ values. We take $t=10$ for $\lambda = 2$ to allow enough time for absorption.}
	\label{fig:big_sim}
\end{figure*}

In continuous time, \eqref{eq:incidence_ratef_diff} turns into a integro-differential equation of the form
\begin{align}
\label{eq:incidence_ratef_diff_conttime}
\frac{\mathrm{d}R_h}{\mathrm{d}t} = R_2(t)+\int_{0}^{t}R_h(t)R_2(\tau-t)\mathrm{d}\tau.
\end{align}

If needed, we can solve \eqref{eq:incidence_ratef} with z-transform, i.e.,
\begin{align}
\label{eq:incidence_ratef_diff_ztrans}
z\mathcal{R}_h(z)\!+\!z\!R_h[1]\!-\!\mathcal{R}_h(z)\! =\! \mathcal{R}_2(z)\! +\! z\!R_2[1]\! +\!\mathcal{R}_h(z)\mathcal{R}_2(z),
\end{align}
where $\mathcal{R}_h(z)=\mathcal{Z}\{R_h[N]\}$. Solving \eqref{eq:incidence_ratef_diff_ztrans}, we reach
\begin{align}
\label{eq:incidence_ratef_diff_ztrans_soln}
\mathcal{R}_h(z) = \frac{\mathcal{R}_2(z)}{z-\mathcal{R}_2(z)-1},
\end{align}
by using $R_h[1]=R_2[1]$, which is directly deducible from \eqref{eq:incidence_ratef_diff}. We can solve \eqref{eq:incidence_ratef_diff_conttime} similar to \eqref{eq:incidence_ratef_diff_ztrans_soln}, via Laplace transform, instead of z-transform.

Time complexity of \eqref{eq:incidence_ratef} is $\mathcal{O}(N^2)$, while time complexity of \eqref{eq:incidence_ratef_diff_ztrans_soln}, which relies on the z-transform is $\mathcal{O}(N\log(N))$. But, since array operations are faster than most other operations, it might be possible to find an approximate $R_h[N]$ through \eqref{eq:incidence_ratef} with the desired precision within reasonable time.

We can relax the edge start assumption by including the first incidence rate, $R_1$. We find $R_1[N]$ either by taking the derivative of the (11) of
\cite{koca2022channel} or by adjusting (29) of the same paper, i.e.,
\begin{align}
\nonumber
R_1[N] = \sum\limits_{i=-\infty}^\infty & (-1)^{i+1} \left[\frac{(2i+1)L}{\sqrt{8\pi N^3}}\exp\left(-\frac{(2i+1)^2L^2}{2N}\right)\right.\\
\label{eq:R_1}
&\left. -\frac{(2i-1)L}{\sqrt{8\pi N^3}}\exp\left(-\frac{(2i-1)^2L^2}{2N}\right)\right].
\end{align}

$R_1[N]$ provides the rate of incidence at time step $N$, when the motion starts at the center of the domain. However, as we state in Sec. \ref{sec:sys} and Sec. \ref{sec:reflect_cont}, \eqref{eq:R_1} is based on the Brownian Motion model described by \eqref{eq:alt_brown_def}. Although both models converge, for small $N$, \eqref{eq:R_1} allows incidence for $N\leq H$. To make \eqref{eq:R_1} compatible with \eqref{eq:incidence_ratef_diff}, we prohibit incidences for $N\leq H$, i.e., 
\begin{align}
R_1[N] = \begin{cases} 0 \phantom{o} &\mathtt{if } \phantom{o} N\leq H\\
R_1[N] \phantom{o} &\mathtt{if } \phantom{o} N>H
\end{cases}.
\end{align}

Finally, the rate of incidence for $\mathcal{B}[0]=0$ becomes
\begin{align}
\label{eq:fin_rate_of_inc}
R[N] = R_1[N] \ast R_h[N],
\end{align}
where $\ast$ stands for convolution. We demonstrate \eqref{eq:fin_rate_of_inc} along with the simulation results in Fig. \ref{fig:K1_s01}. Since there is no absorption in this setting, the incidence rate settles as concentration homogeneously disperses within the domain as expected.

To find the rate of absorption, $R_A$, we modify \eqref{eq:incidence_ratef_diff} to include $P_A$, i.e.,
\begin{align}
\label{eq:incidence_ratef_diff_abs}
R_h[N+1]-&R_h [N] = R_2[N+1] +\\
\nonumber
&(1-P_A)\sum_{i=1}^{N}R_h[i]R_2[N-i+1],
\end{align}
where $P_A$, the probability of absorption is
\begin{align}
\label{eq:erban_7}
P_A = \frac{\lambda}{D}\Delta{x},
\end{align}
directly from \cite{erban2007reactive}.

Finally, rate of absorption becomes
\begin{align}
\label{eq:abs_rate_amele}
R_A[N] = P_A \times R[N].
\end{align}

Fig. \ref{fig:big_sim} illustrates the accuracy of \eqref{eq:abs_rate_amele}. Apart from random noise factors, as can be seen in the zoomed in subplots in Fig. \ref{fig:big_sim}(a)-(c), our theoretical findings fit the Monte Carlo simulations perfectly. Similarly, as evident from Fig. \ref{fig:big_sim}(d)-(f), the theoretical survival probability and the simulation results are in perfect agreement. This perfect agreement is also visible in the error plots Fig. \ref{fig:big_sim}(g)-(i). Error is completely random and it thusly does not accumulate. Moreover, error magnitude diminishes as the incidence rate diminishes, which is another observation suggesting that errors are random rather than systematic.

\begin{figure}[!t]
	\centering	
	\includegraphics[width=3.5in]{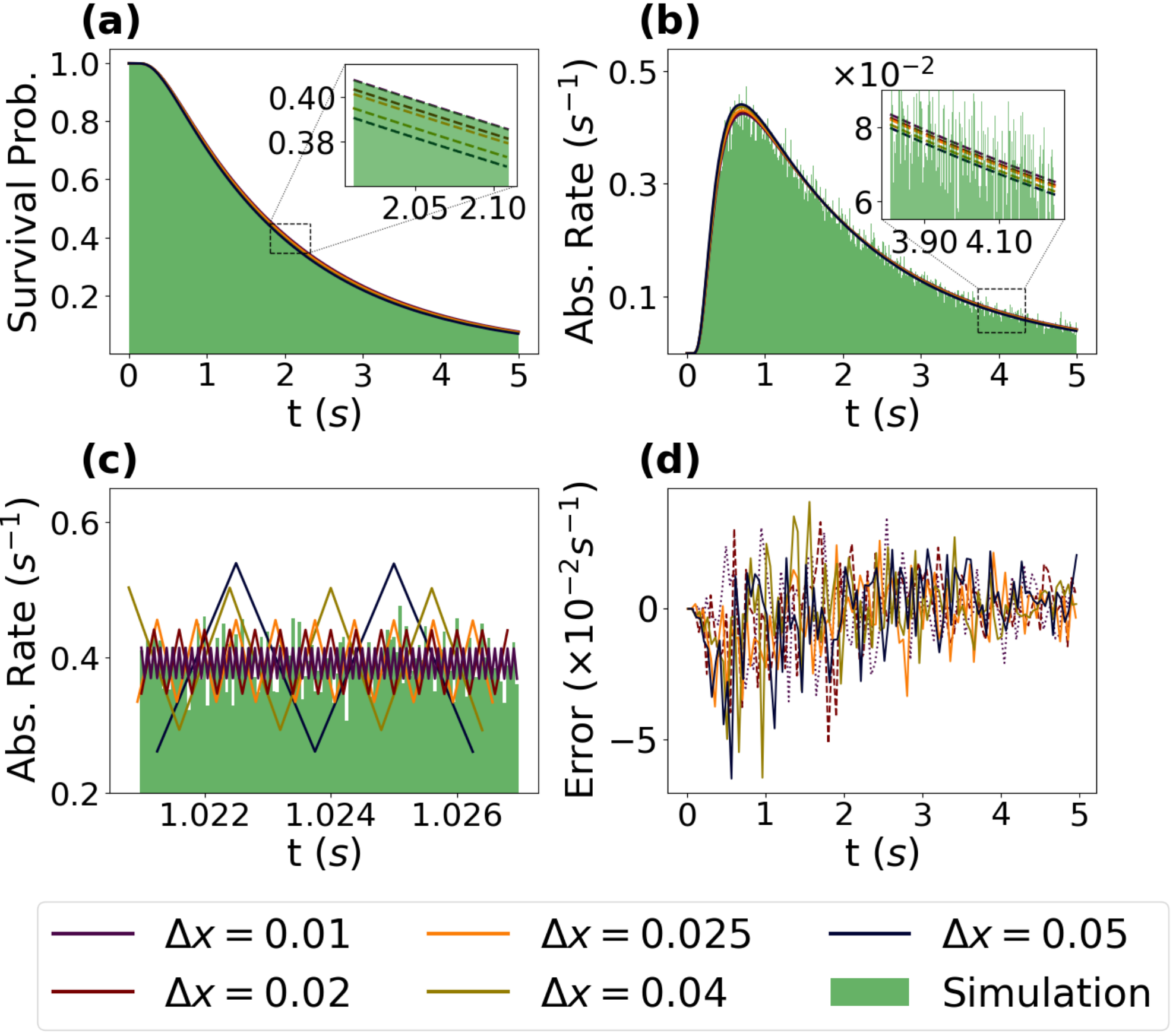}
	\vspace{-1mm}
	\caption{(a) Survival probability vs. time. (b) Smoothed rate of absorption vs. time. (c) Zoomed rate of absorption vs. time. (d) Error in the rate of absorption vs. time. We choose $\lambda=10$ and different $\Delta{x}$ values.}
	\label{fig:deltaxfig}
\end{figure}

Fig. \ref{fig:deltaxfig} demonstrates the impact of $\Delta{x}$ on the theoretical results. In Fig. \ref{fig:deltaxfig}(b), we smooth the rate with a moving average to eliminate the high frequency fluctuations arising from the parity, similar to the zigzags in Fig. \ref{fig:big_sim}. The raw rate is displayed in Fig. \ref{fig:deltaxfig}(c). To quantify the discrepancy between the simulation results and our theoretical findings, we use a reverse moving average approach to equate the lengths of the theoretical and simulation data. To this end, we first find $Q_{\Delta{t}}$, the ratio of simulation $\Delta{t}$ and the theoretical $\Delta{t}$, i.e.,
\begin{align}
Q_{\Delta{t}} = \frac{\Delta{x}^2/2D}{0.01^2/2D} = \left(\frac{\Delta{x}}{0.01}\right)^2.
\end{align}
When $Q_{\Delta{t}}$ is an integer, we can expand the theoretical result to the size of the simulations and find the reverse window averaged rate, $R_A^{(r)}[N]$, by
\begin{align}
R_A^{(r)}[N] = R_A[N] \otimes \underbrace{\left[\frac{1}{Q_{\Delta{t}}}, \dots, \frac{1}{Q_{\Delta{t}}}\right]}_{Q_{\Delta{t}} \phantom{o}\text{times}},
\end{align}
where $\otimes$ is the tensor product. When $Q_{\Delta{t}}$ is not an integer, we express $Q_{\Delta{t}}$ as a fraction, i.e., $Q_N/Q_D$. We then expand each $Q_D$ long slice of $R_A[N]$ to length $Q_N$ by both repetition and smoothing, satisfying
\begin{align}
\sum_{j \in j_N} R_A^{(r)}[j] = \sum_{i \in \{i_D\}}R_A[i], \phantom{ooo} 
\end{align}
where
\begin{align}
j_N &= \{kQ_N,\dots,k(Q_N+1)-1\},\\
i_D &= \{kQ_D,\dots,k(Q_D+1)-1\},\\
k &\in \left\lbrace 0, 1, \dots, \frac{t}{Q_D{\Delta{t}}}\right\rbrace.
\end{align}

Examining Fig. \ref{fig:deltaxfig}, we see that when we use a larger $\Delta{x}$ values than the simulation parameter, we can still very successfully estimate the probability of absorption. Furthermore, looking at Fig. \ref{fig:deltaxfig}(d), we observe that the error in the rate is relatively unaffected by $\Delta{x}$.  Thus, although \eqref{eq:incidence_ratef} is computationally heavy, by reducing $\Delta{x}$, we can reach the rate of absorption with the desired accuracy to reduce the computational burden. 

\section{Conclusion}
\label{sec:concpap01}

In this work, we derive expression for the second incidence rate, higher incidence rate and for the absorption rate in a Brownian Molecular Communication Channel. We verify each of our critical findings with Monte Carlo simulations.

Currently, our results are directly applicable to cuboid volumes with opposing partially absorbing boundaries. To address this shortcoming, we are already extending our work to three dimensional volumes with some or all boundaries being partially absorbing. The main limitation of our work is the $\mathcal{O}(N^2)$ complexity, limiting the scalability of our results. To remedy this, we plan to develop an approximate expression that is computed in $\mathcal{O}(N)$ and accurate for small absorption rates. Finally, we plan to apply our findings to Synaptic Molecular Communication in a future work.

\bibliographystyle{IEEEtranN}
\bibliography{krang} 

\appendices
\section*{Appendix A}
\begin{proof}[Proof of Theorem \ref{thm:recurse}]
We first define the set $\mathcal{S}(N)=\left\{\omega\in\{-1,+1\}^{N}\right\}$, where $-1$ denotes for a movement in $-x$ direction and $+1$ in $+x$ direction. We know that $\lvert \mathcal{S}(N) \rvert = 2^N$ where $\lvert \dots \rvert$ stands for the size of the set.

Let $\mathcal{S}(N,a) \subseteq \mathcal{S}(N)$, which consists of all series describing the trajectory of $\mathcal{B}$ starting at $x=H-a$ and never incident on $x=H$ within the first $N$ time steps. Then, the probability of $\mathcal{B}$, starting at $x=H-a$ and not incident on $x=H$ within the first $N$ time steps is
\begin{equation}
P_a(M>N) = \frac{\lvert \mathcal{S}(N,a) \rvert}{2^N}.
\end{equation}

Now, if $\mathcal{B}$ is not incident on either boundaries within the first $N$ steps, then $s$, the series that describe the motion of $\mathcal{B}$, obeys
\begin{equation}
s\in \mathcal{S}(N,1) \land s \in \mathcal{S}(N,2H-1),
\end{equation}
and
\begin{align}
\label{eq:p1_def}
P(s\in \mathcal{S}(N,1)) &= P_1(M>N),\\
\label{eq:p2h1_def}
P(s\in \mathcal{S}(N,2H-1)) &= P_{2H-1}(M>N).
\end{align}

Note that we have already calculated \eqref{eq:p1_def} with \eqref{eq:bertrand_exact}. However, we cannot use the Bertrand's Ballot Theorem directly to find \eqref{eq:p2h1_def}. To remedy this, we deduce a recursive relation to find $\lvert \mathcal{S}(N,2H-1) \rvert$.

Since all $s \in \mathcal{S}(N,1)$ has to start with a $-1$, we can find $\mathcal{S}(N,2)$ by removing the first element of all $s \in \mathcal{S}(N+1,1)$, i.e.,
\begin{equation}
\label{eq:process_4}
\mathcal{S}(N,2) = \{s[1:]\hspace{0mm}\mid s \in \mathcal{S}(N+1,1)\},
\end{equation}
where we use the python array slicing convention. Following \eqref{eq:process_4}, $P_2(M>N)$ becomes
	
\begin{align}
\nonumber
P_2(M>N) &= 2^{-N}\lvert \mathcal{S}(N,2) \rvert \\
\nonumber
&= 2 ^ {-N} \lvert \mathcal{S}(N+1,1) \rvert \\
\label{eq:process_5}
&= 2P_1(M>N+1).
\end{align}

Similarly,
\begin{align}
\nonumber
\lvert \mathcal{S}(N,3)\rvert &= \lvert \mathcal{S}(N+1,2) \rvert - \lvert \mathcal{S}(N,1) \rvert\\
&= \lvert\mathcal{S}(N+2,1)\rvert - \lvert\mathcal{S}(N,1)\rvert,
\end{align}
thus we can calculate $P_3(M>N)$ as
\begin{equation}
\label{eq:process_6}
P_3(M>N) = 4P_1(M>N+2)- P_1(M>N).
\end{equation}

It is evident that we can calculate $P_x(M>N)$ in a recursive fashion, i.e.,
\begin{align}
P_x(M>N) \!&=\! 2P_{x-1}(M>N+1)\! -\! P_{x-2}(M>N).
\end{align}
\end{proof}

\section*{Appendix B}
\begin{proof}[Proof of Theorem \ref{thm:hypergeom}]
We notice that \eqref{eq:process_recursive}, the recursion relation provided of the Extended Ballot Theorem by Theorem \ref{thm:recurse}, is the same recursion relation given by the Chebyshev polynomials of the second kind. Accordingly, we find the direct expression for $P_x(M>N)$, that does not necessitate recursive calculations as

\begin{align}
\label{eq:cheb0}
P_x(M>N) &= \sum_{i=0}^{\xi} (-1)^i 2^{x-1-2i}\binom{x-i-1}{i} \times \\ 
\nonumber
&\phantom{oooooooo}P_1(M>N+x-2i+\xi^\prime)
\end{align}

where $\xi$, $\xi^\prime$ and $\xi^{\prime\prime}$ are defined by (\ref{eq:def_xi1}-\ref{eq:def_xipprime}). As expected, the coefficients of $P_1$ are the same coefficients as the $\alpha^{x-1-2i}$ term of the Chebyshev polynomials of the second kind \cite{oeis}.

By substituting \eqref{eq:bertrand_exact} for $P_1$ and further manipulating  \eqref{eq:cheb0}, we reach
\begin{align}
\label{eq:hyper_01}
P_x(M>N) &= \sum_{i=0}^q (-1)^i\binom{2q+\xi^\prime-i}{i}2^{2q+\xi^\prime-2i}\times 
\nonumber\\ &\phantom{ooo}\frac{1}{2^{N+2q-2i+\xi^{\prime\prime}}}\binom{N+2q-2i+\xi^{\prime\prime}}{\frac{N+\xi^{\prime\prime}}{2}+q-i}.
\end{align}

To avoid the falling factorials we make the $i \leftarrow q-i$ substitution to the \eqref{eq:hyper_01} and reach
\begin{align}
\label{eq:hyper_02}
P_x(M>N) &= 2^{\xi^\prime}(-1)^{\xi}\sum_{i=0}^q (-1)^i\binom{q+i+\xi^\prime}{q-i}4^{i}\times 
\nonumber\\ &\phantom{ooo}\frac{1}{2^{N+2i+\xi^{\prime\prime}}}\binom{N+2i+\xi^{\prime\prime}}{\frac{N+\xi^{\prime\prime}}{2}+i}.
\end{align}

We apply Legendre's Duplication Formula of the form
\begin{align}
\label{eq:legendres}
\frac{1}{2^{2z}}\binom{2z}{z}=\frac{\Gamma\left(z+\frac{1}{2}\right)}{\sqrt{\pi}\Gamma\left(z+1\right)},
\end{align}
to \eqref{eq:hyper_02} and obtain
\begin{align}
\label{eq:hyper_03}
P_x(M>N) &= \frac{2^{\xi^\prime}(-1)^{\xi}}{\sqrt{\pi}}\sum_{i=0}^q (-1)^i\binom{q+i+\xi^\prime}{q-i}\times \\ 
\nonumber &\phantom{ooooo}\frac{\Gamma\left(\Delta + i +\frac{1}{2}\right)}{\Gamma\left(\Delta + i + 1\right)}4^{i}.
\end{align}

Finally we perform arithmetic manipulations to transform \eqref{eq:hyper_03} into \eqref{eq:hypergeom_def} format, i.e.,
\begin{align}
P_x(M>N)&=\frac{2^{\xi^\prime}(-1)^{\xi}(q+1)^{\xi^\prime}}{\sqrt{\pi}}\frac{\Gamma\left(\Delta +\frac{1}{2}\right)}{\Gamma\left(\Delta + 1\right)}\times \\ 
\nonumber &\sum_{i=0}^q (-1)^i\binom{q}{i}\frac{(q+\xi^\prime+1)_i}{\left(\frac{1}{2}+\xi^\prime\right)_i} \frac{\left(\Delta + \frac{1}{2}\right)_i}{\left(\Delta + 1\right)_i},
\end{align}
which is the expansion of \eqref{eq:hypergeom}.\end{proof}

\end{document}